\tikzset{%
    ->,  
    >=stealth, 
    node distance=2.2cm, 
    every state/.style={thick, fill=gray!10}, 
    initial text=$ $, 
    }
\newcommand{\cla}[1]{\left[#1\right]}
\newcommand{\ccla}[1]{\ensuremath{\llbracket #1 \rrbracket}}
\newcommand{\eqdef}{\ensuremath{\triangleq}}
\newcommand{\iffdef}{\ensuremath{\stackrel{\vartriangle}{\iff}}}
\newtheorem{prop}{Proposition}[section]
\newtheorem{cor}{Corollary}[section]
\newcommand{\ra}{\ensuremath{\rightarrow}}
\newcommand{\indist}{\ensuremath{\equiv}}
\newcommand{\proba}[1]{\ensuremath{\probP_{#1}}}
\newcommand{\Real}{\ensuremath{\mathbb{R}}}
\newcommand{\indicator}{\ensuremath{\mathds{1}}}
\newcommand{\Realp}{\ensuremath{\Real_{+}}}
\newcommand{\Stasim}{\ensuremath{E}}
\newcommand{\probP}{\ensuremath{P}}
\newcommand{\Sta}{\ensuremath{Q}}
\newcommand{\NTree}{\ensuremath{T}}
\newcommand{\Tok}{\ensuremath{O}}
\newcommand{\Aut}{\ensuremath{\mathcal{A}}}
\newcommand{\AutB}{\ensuremath{\mathcal{B}}}
\newcommand{\AutG}{\ensuremath{\mathcal{G}}}
\newcommand{\lmodel}{\ensuremath{\mathcal{L}}}
\newcommand{\lmodelT}{\ensuremath{\mathcal{L}_{\terminal}}}
\newcommand{\sta}{\ensuremath{q}}
\newcommand{\worw}{\ensuremath{w}}
\newcommand{\woru}{\ensuremath{u}}
\newcommand{\worv}{\ensuremath{v}}
\newcommand{\worx}{\ensuremath{x}}
\newcommand{\rthp}{\ensuremath{p}}
\newcommand{\rthr}{\ensuremath{r}}
\newcommand{\Symb}{\ensuremath{\Sigma}}
\newcommand{\Psimplex}{\ensuremath{\Delta}}
\newcommand{\accstr}{\ensuremath{\alpha}}
\newcommand{\symb}{\ensuremath{\sigma}}
\newcommand{\policy}{\ensuremath{\pi}}
\newcommand{\tra}{\ensuremath{\tau}}
\newcommand{\emptyW}{\ensuremath{\lambda}}
\newcommand{\quantp}{\ensuremath{\kappa}}
\newcommand{\pdist}{\ensuremath{\rho}}
\newcommand{\pathT}{\ensuremath{\zeta}}
\newcommand{\ce}{\ensuremath{\gamma}}
\newcommand{\QuaNT}{\ensuremath{\mathrm{QNT}}}
\newcommand{\terminal}{\ensuremath{\$}}
\newcommand{\SymbT}{\ensuremath{\Symb_\terminal}}
\newcommand{\staI}{\ensuremath{\sta_{\mathrm{in}}}}
\newcommand{\ProbT}{\ensuremath{\Psimplex(\SymbT)}}
\newcommand{\ProbTok}{\ensuremath{\Psimplex(\Tok)}}
\newcommand{\Words}{\ensuremath{\Symb^\ast}}
\newcommand{\WordsT}{\ensuremath{\SymbT^\ast}}
\newcommand{\policyW}{\ensuremath{\policy^\ast}}
\newcommand{\traW}{\ensuremath{\tra^\ast}}
\newcommand{\EQ}{\ensuremath{\mathbf{EQ}}}
\newcommand{\MQ}{\ensuremath{\mathbf{MQ}}}
\newcommand{\DisS}{\ensuremath{\mathit{Dis}}}
\newcommand{\AccS}{\ensuremath{\mathit{Acc}}}
\DeclareMathOperator{\Pref}{Pref}
\newcommand{\lca}{\ensuremath{\mathsf{lca}}}
\newcommand{\sift}{\ensuremath{\mathsf{sift}}}
\newcommand{\topr}[1]{\ensuremath{\mathsf{top}_{#1}}}
\newcommand{\rank}[1]{\ensuremath{\mathsf{rank}_{#1}}}
\newcommand{\reach}{\ensuremath{\mathit{reach}}}
\newcommand{\build}{\ensuremath{\mathsf{build}}}
\newcommand{\update}{\ensuremath{\mathsf{update}}}
\newcommand{\qstaI}{\ensuremath{\overline{\sta}_{\mathrm{in}}}}
\newcommand{\qpolicy}{\ensuremath{\overline{\policy}}}
\newcommand{\qtra}{\ensuremath{\overline{\tra}}}
\newcommand{\qsta}{\ensuremath{\overline{\sta}}}
\newcommand{\qSta}{\ensuremath{\overline{\Sta}}}
\newcommand{\qAut}{\ensuremath{\overline{\Aut}}}
\newcommand{\qAutB}{\ensuremath{\overline{\AutB}}}
\newcommand{\bos}{\ensuremath{\textsc{\tiny BOS}}}
\newcommand{\eos}{\ensuremath{\textsc{\tiny EOS}}}
\newcommand{\supp}{\ensuremath{\textsf{supp}}}
\newcommand{\tokenizer}{\ensuremath{\textsf{tok}}}
\newcommand{\samp}{\ensuremath{\textsf{samp}}}
\newcommand{\isdef}[1]{\ensuremath{\indicator_{#1}}}
\newcommand{\indistICGI}{\ensuremath{\indist^\bullet}}
\newcommand{\cclaICGI}[1]{\ensuremath{\ccla{#1}}^\bullet}
\newcommand{\samptopr}[1]{\ensuremath{\mathsf{samptop}_{#1}}}
\newcommand{\qlmodel}{\ensuremath{\overline{\lmodel}}}
\newcommand{\Str}{\ensuremath{\mathrm{Char}^\ast}}
\newcommand{\Tokseq}{\ensuremath{\Tok^\ast}}
\newcommand{\tostr}{\ensuremath{\mathsf{str}}}
\newcommand{\tokenize}[1]{\ensuremath{\stackon[-.1pt]{#1}{$\scriptscriptstyle\frown$}}}
\newcommand{\probaP}{\ensuremath{\boldsymbol{P}}}
\newcommand{\charsymb}[1]{\textsl{\small #1}}
\newcommand{\sync}{\ensuremath{\times}}
\newcommand{\isundef}{\ensuremath{\boldsymbol{0}}}
\newcommand{\qhole}{\ensuremath{{\isundef}}}
\newcommand{\pref}{\ensuremath{\mathsf{pref}}}
\title{Analyzing constrained LLM through PDFA-learning\thanks{Presented at LearnAut 2024, July 7, 2024.}}
\author{
    {M. Carrasco \hspace*{1.5ex}
    F. Mayr \hspace*{1.5ex}
    S. Yovine \hspace*{1.5ex}
    J. Kidd \hspace*{1.5ex}
    M. Iturbide \hspace*{1.5ex}
    J. da Silva \hspace*{1.5ex}
    A. Garat}\\
    Facultad de Ingeniería\\ Universidad ORT Uruguay\\ Montevideo, Uruguay\\
    \texttt{carrasco\_m@ort.edu.uy}\\
    \texttt{mayr@ort.edu.uy}\\
    \texttt{yovine@ort.edu.uy}\\
}
\begin{document}

\maketitle

\setlength{\abovedisplayskip}{0pt}
\setlength{\belowdisplayskip}{0pt}
\setlength{\abovedisplayshortskip}{0pt}
\setlength{\belowdisplayshortskip}{0pt}

\begin{abstract}
We define a congruence that copes with null next-symbol probabilities that arise when the output of a language model is constrained by some means during text generation. We develop an algorithm for efficiently learning the quotient with respect to this congruence and evaluate it on case studies for analyzing statistical properties of LLM. 
\end{abstract}


\section{Introduction}\label{sec:intro}
Many works have studied neural language models, such as Recurrent Neural Networks (RNN) and Transformers, through the analysis of surrogate automata of different sorts obtained from the former in a variety of ways, with the purpose of verifying or explaining their behavior (e.g.~\cite{empirical_eval_rule_extraction,Weiss17,bollig_21,DBLP:conf/icgi/MayrYCPV23,pmlr-v217-muskardin23a}).
A few have proposed to somehow compose neural language models with automata or regular expressions in order to verifying properties on-the-fly while learning~(\cite{mayr2021property}), assessing the existence of memorization, bias, or toxicity~(\cite{kuchnik2023validating}), and guiding text generation~(\cite{willard2023efficient}).

In this paper, we first study theoretical questions that arise when applying this last approach in the context of active learning of probabilistic deterministic finite automata (PDFA)~(\cite{IEEE:Vidal2005}). In Sec.~\ref{sec:lmodels}, we address the question of dealing with null next-symbol probabilities that appear when constraining the output of a language model by composing it with an automaton and/or a sampling strategy, such as the top $k$ most likely symbols. We do this by defining an appropriate congruence that induces a quotient PDFA without 0-probability transitions. In Sec.~\ref{sec:omit-zero}, we adapt the learning algorithm of~\cite{DBLP:conf/icgi/MayrYCPV23} to efficiently learn the quotient PDFA. In Sec.~\ref{sec:llm}, we discuss issues that arise when analyzing real large language models, in particular the role of tokenizers, and apply the algorithm on problems discussed in~\cite{kuchnik2023validating,willard2023efficient} when generating text with GPT2. Experimental results show the interest of our approach.

\section{Language models}\label{sec:lmodels}
Let \Symb\ be a finite set of \emph{symbols}, \Words\ the set of finite \emph{strings}, $\emptyW\in\Words$ the \emph{empty} string, and
$\SymbT\eqdef\Symb\cup\{\terminal\}$, where $\terminal\not\in\Symb$ is a special symbol used to denote \emph{termination}.
The \emph{probability simplex} over \SymbT\ is $\ProbT \eqdef \{ \pdist:\Symb\ra\Realp \mid \sum_{\symb\in\Symb} \pdist(\symb)=1 \}$.
The \emph{support} of $\pdist\in\ProbT$ is $\supp(\pdist) \eqdef \{ \symb\in\Symb \mid \pdist(\symb)>0\}$.
A \emph{language model} is a total function $\lmodel:\Words\ra\ProbT$.

Language models can be expressed in different ways, e.g., RNN, Transformers, and PDFA. 
Following~\cite{DBLP:conf/icgi/MayrYCPV23}, we define a PDFA \Aut\ over \Symb\ as a tuple $(\Sta, \staI, \policy, \tra)$, where
\Sta\ is a finite set of states,
$\staI \in \Sta$ is the initial state, 
$\policy : \Sta \ra \ProbT$, and 
$\tra : \Sta \times \Symb \ra \Sta$. 
Both \policy\ and \tra\ are total functions.
We define $\traW$ and $\policyW$ as follows:
$\traW(\sta, \emptyW) \eqdef \sta$ and $\traW(\sta, \symb \woru) \eqdef \traW( \tra(\sta, \symb), \woru )$, and
$\policyW(\sta, \woru) \eqdef \policy( \traW( \sta, \woru) )$.
We omit the state if it is \staI\ and write $\traW(\woru)$ and $\policyW(\woru)$.
\Aut\ defines the language model such that $\Aut(\woru) \eqdef \policyW(\woru)$.
Fig.~\ref{fig:pdfa_ex1} gives examples of PDFA. The number below \sta\ is the probability of termination $\policy(\sta)(\terminal)$, and the one associated with an outgoing transition labeled \symb\ corresponds to $\policy(\sta)(\symb)$.

\begin{figure}[htbp]
\centering
    \begin{subfigure}[c]{0.4\textwidth}
            \begin{tikzpicture}\small
                \node[state, initial above] (q0) {\stackanchor{$q_0$}{0}};
                \node[state, right of=q0] (q1) {\stackanchor{$q_1$}{0.4}};
                \node[state, left of=q0] (q2) {\stackanchor{$q_2$}{0.2}};
                \draw   
                    (q0) edge[above] node{$a/0.3$} (q1)
                    (q0) edge[above] node{$b/0.7$} (q2)
                    (q1) edge[loop above] node{$a/0.6$} (q1)
                    (q1) edge[loop below] node{$b/0$} (q1)
                    (q2) edge[loop above] node{$a/0.4$} (q2)
                    (q2) edge[loop below] node{$b/0.4$} (q2);
            \end{tikzpicture}
        \end{subfigure}
    \begin{subfigure}[c]{0.4\textwidth}
            \begin{tikzpicture}\small
                \node[state, initial above] (q0) {\stackanchor{$q_0$}{0}};
                \node[state, right of=q0] (q1) {\stackanchor{$q_1$}{0.4}};
                \node[state, left of=q0] (q2) {\stackanchor{$q_2$}{0}};
                \draw   
                    (q0) edge[above] node{$a/0.3$} (q1)
                    (q0) edge[above] node{$b/0.7$} (q2)
                    (q1) edge[loop above] node{$a/0.6$} (q1)
                    (q1) edge[loop below] node{$b/0$} (q1)
                    (q2) edge[loop above] node{$a/0.5$} (q2)
                    (q2) edge[loop below] node{$b/0.5$} (q2);
            \end{tikzpicture}
        \end{subfigure}
        \caption{PDFA \Aut\ (left) and \AutB\ (right) over $\Symb = \{a, b\}$ with $\staI = q_0$.}
        \label{fig:pdfa_ex1}
\end{figure}

\paragraph{Sampling}
\lmodel\ can be used to generate random strings $\worx\in\Words$ with $\worx_i\sim\lmodel(\worx_{<i})$, for $i\geq 1$, where $\worx_i$ is the $i$-th symbol and $\worx_{<i}=\worx_1\ldots\worx_{i-1}$ 
with $\worx_{<1}\eqdef\emptyW$. 
That is, by sampling the next symbol to concatenate from the distribution of the prefix until the termination symbol is selected. In general, this procedure may not terminate.
Indeed, $\lmodel$ uniquely defines a probability distribution over $\Words\cup\Sigma^\omega$, where $\Sigma^\omega$ denotes the set of all infinite strings.
More formally, let $\proba{}: \Words\ra\Realp$ be: 
$\proba{}(\emptyW) \eqdef 1$ and 
$\proba{}(\woru\symb) \eqdef \proba{}(\woru) \cdot \lmodel(\woru)(\symb)$.
We expect $\proba{}(\worw)$ to represent the probability of $\worw$ being a prefix. We also define $\proba{\terminal}: \Words\ra\Realp$ by $\proba{\terminal}(\woru) \eqdef \proba{}(\woru) \cdot \lmodel(\woru)(\terminal)$. In this case, we expect $\proba{\terminal}(\worw)$ to represent the probability of occurrence of the finite string $\worw$. Proposition \ref{prop:existence_probability} guarantees the existence of a unique probability distribution \probaP\ over $\Words\cup\Symb^\omega$ whose prefix probabilities are given by $P$ and whose restriction to $\Words$ is given by $\proba{\terminal}$: if $\worx$ is a random string in $\Words\cup\Sigma^\omega$ with distribution \probaP, then $\probaP\left\{\worw\in \pref(\worx)\right\} = \proba{}(\worw)$ and $\probaP\left\{\worx=\worw\right\}=\proba{\terminal}(\worw)$. Here $\pref(\worx)$ denotes the set of all prefixes in $\Words$ of the string $\worx$, including $\worx$ itself.
In general, $\proba{\terminal}$ is not a probability distribution over $\Words$ as it may not sum 1 (\cite{IEEE:Vidal2005}).
In fact, $\sum_{\woru\in\Words} \proba{\terminal}(\woru)=1$ iff $\probaP\left\{|\worx|<\infty\right\}=1$.
Necessary and sufficient conditions for termination of the sampling process involve statements about the probabilities of the terminal symbol~(\cite{du2023}). 

Not every occurrence of a zero probability is problematic.
For \Aut\ in Fig.~\ref{fig:pdfa_ex1}, the fact that $\policy_\Aut(\sta_1)(b)=0$ is harmless:  
$\sum_{\woru\in\Words} \proba{\terminal}(\woru)
= 0.3 \cdot 0.4 \sum_{n=0}^\infty 0.6^n
+ 0.7 \cdot 0.2 \sum_{n=0}^\infty 0.8^n = 0.3 + 0.7 
= 1$.
However, for \AutB, $\policy_\Aut(\sta_2)(\terminal)=0$ is troublesome:  
$\sum_{\woru\in a\Words} \proba{\terminal}(\woru) 
+ \sum_{\woru\in b\Words} \proba{\terminal}(\woru)
= 0.3 \cdot 0.4 \sum_{n=0}^\infty 0.6^n = 0.3 \neq 1$.
\AutB\ can actually be obtained from \Aut\ by constraining the set of symbols to sample from to the top-$2$ most likely ones: 
$\topr{2}(\policy_\Aut(\sta_2)) = \{a, b\}$, and normalizing the probabilities.
It results in that no finite string starting with symbol $b$ can be sampled in \AutB\ with distribution \proba{\terminal}.
We deal with the general situation in this paper since the possibility of non-termination in the sampling process is harmless for our purposes.
Using $\topr{\rthr}$ or $\topr{\rthp}$ (most likely symbols with a cumulative probability cutoff of $\rthp$) is usual practice when sampling from an LLM. So, it is relevant to formalize the effect of these constraints on \lmodel.
%
A \emph{sampling strategy} $\samp:\ProbT\ra\ProbT$ is s.t. for all $\pdist\in\ProbT$, $\supp(\samp(\pdist)) \subseteq \supp(\pdist)$.
%
$\samp(\lmodel)$ is the language model obtained by applying \samp\ to $\lmodel(\woru)$ $\forall\woru\in\Words$ and normalizing the probabilities.
In Fig.~\ref{fig:pdfa_ex1}, $\AutB = \samptopr{2}(\Aut)$, where $\samptopr{\rthr}(\pdist)(\symb) = \pdist(\symb)$ if $\symb\in\topr{\rthr}(\pdist)$, otherwise is 0.

\noindent
\textbf{Congruences}
\proba{} is used in~\cite{Carrasco_Oncina_1999,IEEE:Vidal2005} to define the  equivalence relation $\indist$ in $\Words$ which is a \emph{congruence}
with respect concatenating a symbol:

\begin{align}\label{def:indist_DH}
\woru \indist \worv 
    &\iffdef \forall \worw\in\Words.\ 
        \frac{\proba{}(\woru\worw)}{\proba{}(\woru)} 
        = 
        \frac{\proba{}(\worv\worw)}{\proba{}(\worv)}
\end{align}
We define $\isdef{\lmodel}:\Words\ra\{0,1\}$ such that $\isdef{\lmodel}(\woru)=1$ iff $\proba{}(\woru)>0$.
%
%
\begin{prop}\label{prop:indist_new}
For all $\woru,\worv\in\Words.\ \woru \indist \worv$ if and only if
\begin{equation}\label{eq:indist_new}
\isdef{\lmodel}(\woru)=\isdef{\lmodel}(\worv)
\text{ and }
\forall \worw\in\Words.\ 
\isdef{\lmodel}(\woru\worw)=\isdef{\lmodel}(\worv\worw)=1 
\implies \lmodel(\woru\worw) = \lmodel(\worv\worw).
\end{equation}
\proof\rm See Appendix~\ref{proof:prop_indist_new}.
\end{prop}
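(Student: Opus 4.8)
The plan is to argue directly with the prefix mass $\proba{}$, the engine being the one-step factorisation $\proba{}(\woru\worw\symb)=\proba{}(\woru\worw)\cdot\lmodel(\woru\worw)(\symb)$ for $\symb\in\Symb$, which after dividing by $\proba{}(\woru)$ gives
\[ \frac{\proba{}(\woru\worw\symb)}{\proba{}(\woru)}=\frac{\proba{}(\woru\worw)}{\proba{}(\woru)}\cdot\lmodel(\woru\worw)(\symb). \]
First I would fix the reading of \eqref{def:indist_DH} at null denominators. Since appending a symbol multiplies $\proba{}$ by a factor in $[0,1]$, we have $\proba{}(\woru\worw)\le\proba{}(\woru)$, so $\proba{}(\woru)=0$ forces $\proba{}(\woru\worw)=0$ for every $\worw$; hence the only indeterminate form ever met is $0/0$, and I read the equality of the two quotients in \eqref{def:indist_DH} as \emph{both quotients are defined and equal, or both are undefined}. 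Under this reading the quotient on the $\woru$-side is defined (for all $\worw$ simultaneously) precisely when $\isdef{\lmodel}(\woru)=1$, so I split the argument on the pair $(\isdef{\lmodel}(\woru),\isdef{\lmodel}(\worv))$.

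The degenerate cases are then immediate. If exactly one of $\woru,\worv$ is live, every $\worw$ makes one quotient defined and the other undefined, so $\woru\not\indist\worv$, while the first conjunct of \eqref{eq:indist_new} fails; both sides are false. If both are dead, all quotients are undefined on both sides so $\woru\indist\worv$, while $\isdef{\lmodel}(\woru)=\isdef{\lmodel}(\worv)=0$ and, since $\proba{}(\woru\worw)=\proba{}(\worv\worw)=0$ for all $\worw$, the implication in \eqref{eq:indist_new} is vacuous; both sides are true. This leaves the core case $\isdef{\lmodel}(\woru)=\isdef{\lmodel}(\worv)=1$.

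For the core case, forward direction, assume the quotients agree for every $\worw$. Because the $\woru$-quotient at $\worw$ is a defined nonnegative number that is positive iff $\isdef{\lmodel}(\woru\worw)=1$, agreement yields $\isdef{\lmodel}(\woru\worw)=\isdef{\lmodel}(\worv\worw)$ throughout; whenever both equal $1$ the common quotient value at $\worw$ is positive, so dividing the factorisation identity at $\worw\symb$ by it gives $\lmodel(\woru\worw)(\symb)=\lmodel(\worv\worw)(\symb)$ for all $\symb\in\Symb$. To promote this to equality of the whole distributions in $\ProbT$, I use that both sum to $1$ over $\SymbT$, so agreement on $\Symb$ forces agreement on $\terminal$; this is \eqref{eq:indist_new}. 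Conversely, assuming \eqref{eq:indist_new}, I would show the quotients coincide by induction on $|\worw|$: the base $\worw=\emptyW$ gives $1=1$; in the step, if the quotient at $\worw$ is $0$ the induction hypothesis kills both quotients at $\worw\symb$, and if it is positive then $\isdef{\lmodel}(\woru\worw)=\isdef{\lmodel}(\worv\worw)=1$, so \eqref{eq:indist_new} supplies $\lmodel(\woru\worw)=\lmodel(\worv\worw)$ and the factorisation identity carries the equality to $\worw\symb$.

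The genuinely delicate step is not the core induction, which is routine once the factorisation identity is available, but fixing a coherent meaning for the ratios in \eqref{def:indist_DH} when prefixes die and checking that this reading makes all dead prefixes mutually congruent and never congruent to a live one, exactly as \eqref{eq:indist_new} predicts. The other easily missed point is the passage from agreement on $\Symb$ to agreement on all of $\SymbT$, which rests on the simplex constraint rather than on the ratios, since the ratios only ever probe $\lmodel$ on $\Symb$.
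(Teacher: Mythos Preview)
Your proof is correct and follows essentially the same route as the paper: both hinge on the one-step factorisation of $\proba{}$ in terms of $\lmodel$, both handle the undefined-quotient cases by the same convention, and both invoke the simplex constraint to recover the value at $\terminal$. The only cosmetic difference is where the induction sits: the paper inducts on $|\worw|$ in the forward direction (cancelling common factors in the product expansion) while you do so in the backward direction, and conversely your forward argument is a direct division of the two ratio identities where the paper unwinds the product; these are interchangeable reorganisations of the same computation.
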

%
Let $\Stasim$ be an equivalence relation in $\Psimplex(\SymbT)$. $\pdist =_{\Stasim} \pdist'$ denotes equivalence, $\cla{\Psimplex(\SymbT)}_{\Stasim}$ and $\cla{\pdist}_{\Stasim}$ the quotient of \Psimplex(\SymbT) and the class of \pdist\ induced by \Stasim\ respectively.
We require:
\begin{align}\label{def:supp}
\supp(\pdist) = \supp(\pdist') &\;\mathrm{whenever }\ \pdist =_{\Stasim} \pdist'
\end{align}
Motivated by (\ref{eq:indist_new}) we generalize (\ref{def:indist_DH}) as follows: $\woru\indist_{\Stasim}\worv$ if and only if
\begin{equation}\label{def:indist_new}
\isdef{\lmodel}(\woru)=\isdef{\lmodel}(\worv)
\text{ and }
\forall \worw\in\Words.\ 
\isdef{\lmodel}(\woru\worw)=\isdef{\lmodel}(\worv\worw)=1 
\implies \lmodel(\woru\worw) =_{\Stasim} \lmodel(\worv\worw).
\end{equation}
We denote $\ccla{\Words}_{\Stasim}$ the set of equivalence classes of $\indist_{\Stasim}$ and $\ccla{\woru}_{\Stasim}$ the class of $\woru$. 
%
Since $\isdef{\lmodel}(\woru)=\isdef{\lmodel}(\worv)$ for all $\woru\indist_\Stasim\worv$, we extend $\isdef{\lmodel}$ to $\ccla{\Words}_{\Stasim}$ and write $\isdef{\lmodel}(\ccla{\woru})$.
%
%
%

%

%
\begin{prop}\label{prop:congruence_stasim}
$\indist_{\Stasim}$ is a congruence: $\forall \woru,\worv\in\Words.\ \woru\indist_{\Stasim}\worv \implies \forall\symb\in\Symb.\ \woru\symb \indist_{\Stasim} \worv\symb$.
\proof\rm See Appendix~\ref{proof:prop_congruence_stasim}.
\end{prop}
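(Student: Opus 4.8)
The plan is to verify directly that the pair $\woru\symb,\worv\symb$ satisfies the two conjuncts in the definition (\ref{def:indist_new}) of $\indist_\Stasim$, under the hypothesis $\woru\indist_\Stasim\worv$ with $\symb\in\Symb$ fixed. I would keep the two conjuncts separate, since they draw on different parts of the setup.

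First I would dispatch the second conjunct, which is essentially free. For any $\worw\in\Words$ one has $(\woru\symb)\worw=\woru(\symb\worw)$ and $(\worv\symb)\worw=\worv(\symb\worw)$, so instantiating the hypothesis at the word $\symb\worw\in\Words$ gives precisely: whenever $\isdef{\lmodel}((\woru\symb)\worw)=\isdef{\lmodel}((\worv\symb)\worw)=1$, then $\lmodel((\woru\symb)\worw)=_\Stasim\lmodel((\worv\symb)\worw)$. Thus the implication demanded of $\woru\symb,\worv\symb$ reduces, by associativity of concatenation, to the one already granted for $\woru,\worv$.

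The remaining work is the first conjunct, $\isdef{\lmodel}(\woru\symb)=\isdef{\lmodel}(\worv\symb)$, and this is where the support axiom (\ref{def:supp}) enters. I would factor $\proba{}(\woru\symb)=\proba{}(\woru)\cdot\lmodel(\woru)(\symb)$, so that $\isdef{\lmodel}(\woru\symb)=1$ holds exactly when $\isdef{\lmodel}(\woru)=1$ and $\symb\in\supp(\lmodel(\woru))$, and symmetrically for $\worv$. The hypothesis already supplies $\isdef{\lmodel}(\woru)=\isdef{\lmodel}(\worv)$, so I would split on this common value. If it is $0$, then $\proba{}(\woru\symb)=\proba{}(\worv\symb)=0$ and both indicators vanish. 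If it is $1$, I would instantiate the second conjunct of the hypothesis at $\worw=\emptyW$: since $\isdef{\lmodel}(\woru)=\isdef{\lmodel}(\worv)=1$, it yields $\lmodel(\woru)=_\Stasim\lmodel(\worv)$, whence (\ref{def:supp}) forces $\supp(\lmodel(\woru))=\supp(\lmodel(\worv))$. Consequently $\symb\in\supp(\lmodel(\woru))$ iff $\symb\in\supp(\lmodel(\worv))$, and combining this with $\isdef{\lmodel}(\woru)=\isdef{\lmodel}(\worv)=1$ gives $\isdef{\lmodel}(\woru\symb)=\isdef{\lmodel}(\worv\symb)$.

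I expect the only delicate point to be this first conjunct: one must notice that equality modulo $\Stasim$ of the next-symbol distributions does not, on its own, control which one-step continuations remain defined, and that it is precisely the support-preservation requirement (\ref{def:supp}) on $\Stasim$, evaluated at the empty continuation $\worw=\emptyW$, that closes the gap. Everything else is a routine case split on $\isdef{\lmodel}(\woru)$ together with the multiplicative definition of $\proba{}$.
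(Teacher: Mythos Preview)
Your proof is correct and follows essentially the same approach as the paper's: a case split on the common value of $\isdef{\lmodel}(\woru)=\isdef{\lmodel}(\worv)$, with the $0$-case being trivial, and the $1$-case using (\ref{def:supp}) to match the indicators at $\woru\symb,\worv\symb$ together with associativity of concatenation for the implication. Your version is in fact slightly more explicit than the paper's in spelling out that (\ref{def:supp}) is applied after instantiating the hypothesis at $\worw=\emptyW$ to obtain $\lmodel(\woru)=_\Stasim\lmodel(\worv)$.
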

%
Let $\indistICGI_\Stasim$ be the congruence in $\Words$ defined in~\cite{DBLP:conf/icgi/MayrYCPV23}:
\begin{align}\label{def:indist_old}
\woru \indistICGI_\Stasim \worv 
    &\iffdef \forall \worw\in\Words.\ 
        \lmodel(\woru\worw) =_\Stasim \lmodel(\worv\worw)
\end{align}
We denote by $\boldsymbol{0}$ the $\indist_\Stasim$-class all $\woru\in\Words$ with $\isdef{\lmodel}(\woru)=0$. 

%
\begin{prop}\label{prop:one_to_one}
There exists a one-to-one map $\phi:\ccla{\Words}_{\Stasim}\setminus\{\boldsymbol{0}\}\ra\cclaICGI{\Words}_{\Stasim}$.
\proof\rm See Appendix~\ref{proof:prop_one_to_one}.
\end{prop}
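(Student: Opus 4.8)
The plan is to build $\phi$ \emph{not} as the ``obvious'' quotient map $\ccla{\woru}_{\Stasim}\mapsto\cclaICGI{\woru}_{\Stasim}$ --- which, as I explain below, is not well defined --- but by first fixing a representative in each class. Write $D\eqdef\{\woru\in\Words\mid\isdef{\lmodel}(\woru)=1\}$ for the set of defined prefixes. First I would record that every class $C\in\ccla{\Words}_{\Stasim}\setminus\{\boldsymbol{0}\}$ satisfies $C\subseteq D$: any $\woru$ with $\isdef{\lmodel}(\woru)=0$ lies in $\boldsymbol{0}$ by the very definition of that class, so a class different from $\boldsymbol{0}$ cannot meet $\Words\setminus D$. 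Hence for each such $C$ I may choose a representative $\woru_C\in C$ with $\isdef{\lmodel}(\woru_C)=1$ (for definiteness, the shortlex-least word of $C$), and set $\phi(C)\eqdef\cclaICGI{\woru_C}_{\Stasim}$. Once the representatives are fixed this is plainly a function into $\cclaICGI{\Words}_{\Stasim}$.

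The crux is injectivity, and for it I would use the one inclusion that \emph{does} hold between the two congruences on defined words: $\indistICGI_{\Stasim}$ refines $\indist_{\Stasim}$ on $D$. Indeed, if $\woru,\worv\in D$ and $\woru\indistICGI_{\Stasim}\worv$, then $\lmodel(\woru\worw)=_{\Stasim}\lmodel(\worv\worw)$ for \emph{every} $\worw\in\Words$; in particular the implication in (\ref{def:indist_new}) holds for all $\worw$ with $\isdef{\lmodel}(\woru\worw)=\isdef{\lmodel}(\worv\worw)=1$, and $\isdef{\lmodel}(\woru)=\isdef{\lmodel}(\worv)=1$, so $\woru\indist_{\Stasim}\worv$. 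Now suppose $\phi(C)=\phi(C')$, i.e.\ $\woru_C\indistICGI_{\Stasim}\woru_{C'}$. Since $\woru_C,\woru_{C'}\in D$, the inclusion just proved gives $\woru_C\indist_{\Stasim}\woru_{C'}$, whence $C=\ccla{\woru_C}_{\Stasim}=\ccla{\woru_{C'}}_{\Stasim}=C'$. I would note that this argument is insensitive to the particular choice of representatives.

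The main obstacle --- and the reason the construction must pass through representatives --- is that the reverse inclusion fails: $\indist_{\Stasim}$ can be strictly coarser than $\indistICGI_{\Stasim}$ on $D$. Two defined prefixes may agree on every \emph{defined} continuation yet disagree on some continuation unreachable from both, where $\lmodel$ is unconstrained; thus one can have $\woru\indist_{\Stasim}\worv$ while $\lmodel(\woru\worw)\neq_{\Stasim}\lmodel(\worv\worw)$ for some $\worw$ with $\isdef{\lmodel}(\woru\worw)=0$, so that $\woru\not\indistICGI_{\Stasim}\worv$. This is exactly why $\ccla{\woru}_{\Stasim}\mapsto\cclaICGI{\woru}_{\Stasim}$ is ill-defined, and it explains the asymmetry in the statement: $\indist_{\Stasim}$ merges prefixes that $\indistICGI_{\Stasim}$ keeps apart, so the new quotient has no more (and possibly strictly fewer) nontrivial classes than the old one --- precisely what a one-to-one map into $\cclaICGI{\Words}_{\Stasim}$ records. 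The only point needing a little care is the bookkeeping that $\boldsymbol{0}$ collects exactly the undefined prefixes, since this is what frees the representatives $\woru_C$ to live in $D$ and lets the easy inclusion apply.
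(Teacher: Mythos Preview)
Your proof is correct and follows essentially the same approach as the paper: choose a representative in each nonzero class, map it to its $\indistICGI_{\Stasim}$-class, and prove injectivity by observing that $\indistICGI_{\Stasim}$ refines $\indist_{\Stasim}$ on defined prefixes. You are in fact slightly more careful than the paper in justifying why representatives of nonzero classes lie in $D$, and your discussion of why the naive quotient map is ill-defined (and hence why representatives are needed) is a welcome addition that the paper omits.
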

%
%
\begin{cor}
If $\cclaICGI{\Words}_\Stasim$ is finite then $\ccla{\Words}_\Stasim$ is finite, and $\#\ccla{\Words}_\Stasim \leq \#\cclaICGI{\Words}_\Stasim + 1$.
\end{cor}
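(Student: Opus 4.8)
The plan is to read the corollary as an immediate consequence of Proposition~\ref{prop:one_to_one}, so the whole argument is an elementary cardinality bookkeeping. First I would invoke the one-to-one map $\phi:\ccla{\Words}_{\Stasim}\setminus\{\boldsymbol{0}\}\ra\cclaICGI{\Words}_{\Stasim}$ supplied by that proposition. The only general fact I need is that an injection into a finite set has a finite domain whose cardinality is at most that of the codomain. Applying this to $\phi$ under the hypothesis that $\cclaICGI{\Words}_{\Stasim}$ is finite gives at once that $\ccla{\Words}_{\Stasim}\setminus\{\boldsymbol{0}\}$ is finite and that $\#\bigl(\ccla{\Words}_{\Stasim}\setminus\{\boldsymbol{0}\}\bigr) \leq \#\cclaICGI{\Words}_{\Stasim}$.

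Next I would account for the single class that was excised to form the domain of $\phi$. By the very definition of $\boldsymbol{0}$, all $\woru$ with $\isdef{\lmodel}(\woru)=0$ are mutually $\indist_{\Stasim}$-equivalent: the first conjunct of~(\ref{def:indist_new}) holds trivially, and the implication is vacuously satisfied because $\proba{}(\woru)=0$ forces $\proba{}(\woru\worw)=0$, hence $\isdef{\lmodel}(\woru\worw)=0$, for every $\worw$. Thus $\boldsymbol{0}$ contributes at most one element to the set of classes, and $\ccla{\Words}_{\Stasim}$ is obtained from the finite set $\ccla{\Words}_{\Stasim}\setminus\{\boldsymbol{0}\}$ by adding back this lone class, so it is finite with $\#\ccla{\Words}_{\Stasim} \leq \#\cclaICGI{\Words}_{\Stasim}+1$.

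There is no real obstacle here; the statement is a corollary and the substance lives entirely in Proposition~\ref{prop:one_to_one}. The one point I would be careful to state, rather than a difficulty, is the status of $\boldsymbol{0}$ as a partition block: it is a genuine class of $\indist_{\Stasim}$ exactly when some word has $\proba{}=0$, and it is empty (hence absent from $\ccla{\Words}_{\Stasim}$) when $\lmodel$ has full support along every prefix. In the former case $\#\ccla{\Words}_{\Stasim}=\#\bigl(\ccla{\Words}_{\Stasim}\setminus\{\boldsymbol{0}\}\bigr)+1$, and in the latter $\#\ccla{\Words}_{\Stasim}=\#\bigl(\ccla{\Words}_{\Stasim}\setminus\{\boldsymbol{0}\}\bigr)$; either way the $+1$ is a safe upper bound, which is why the inequality, rather than an equality, is the right formulation.
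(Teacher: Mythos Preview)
Your argument is correct and is precisely the intended one: the paper states the corollary without proof because it follows immediately from Proposition~\ref{prop:one_to_one} by the cardinality bookkeeping you describe, with the extra class \isundef\ accounting for the $+1$. Your care in noting that \isundef\ may be empty (so the bound is an inequality rather than an equality) is a welcome clarification but not a departure from the paper's reasoning.
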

%
%
For PDFA, $\indist_\Stasim$ (similarly for $\indistICGI_\Stasim$) can be rephrased over \Sta\ as follows:
$\forall \woru, \worv \in \Words$ 
\begin{align}\label{def:indist_sta}
\traW(\woru) \indist_\Stasim \traW(\worv) &\iffdef \woru\indist_\Stasim\worv
\end{align}
Fig.~\ref{fig:pdfa_ex3}(left) illustrates the difference between $\indist_{\Stasim}$ and $\indistICGI_{\Stasim}$. \Stasim\ is equality. States $\sta_0$, $\sta_1$, and $\sta_2$ are not $\indistICGI_{\Stasim}$-equivalent: $\policy(\sta_2) \neq \policy(\sta_0) = \policy(\sta_1)$, and $\policyW(\sta_0, b) \neq \policyW(\sta_1, b)$. However, $\sta_0 \indist_{\Stasim} \sta_1$ because $\isdef{}(\woru) = 1$ and $\policyW(\sta_0, \woru) = \policyW(\sta_1, \woru)$, for $\woru\in\{a\}^\ast$, and $\isdef{}(\woru) = 0$, for $\woru\in b\Words$.
\begin{prop}\label{prop:counterexample}
Let $\lmodel : \Words\ra\ProbT$, $\woru,\worv\in\Words$ such that $\isdef{\lmodel}(\woru)=\isdef{\lmodel}(\worv)=1$.
For every $\worw\in\Words$ such that $\isdef{\lmodel}(\woru\worw)=1$,
if $\lmodel(\woru\worw)\neq_\Stasim\lmodel(\worv\worw)$, 
then there exists $\worw'\in\pref(\worw)$ such that $\lmodel(\woru\worw')\neq_\Stasim\lmodel(\worv\worw')$, and
$\isdef{\lmodel}(\worv\worw')=1$.
\proof\rm See Appendix~\ref{proof:counterexample}.
\end{prop}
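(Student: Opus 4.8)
The plan is to argue by a case split on whether $\isdef{\lmodel}(\worv\worw)=1$, and to reduce the difficult case to a \emph{support} mismatch that the requirement~(\ref{def:supp}) promotes into a genuine $\neq_\Stasim$ witness. The guiding observation is that the $\woru$-side never causes trouble, so all the real work concerns what happens along the $\worv$-path.

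First I would record a preliminary monotonicity fact: whenever $\isdef{\lmodel}(\woru\worw)=1$, then $\isdef{\lmodel}(\woru\worw')=1$ for every $\worw'\in\pref(\worw)$. This is immediate from the multiplicative definition $\proba{}(\woru\symb)=\proba{}(\woru)\cdot\lmodel(\woru)(\symb)$: a positive product forces each factor, hence each prefix probability, to be positive. For the easy case, suppose $\isdef{\lmodel}(\worv\worw)=1$. Then $\worw'=\worw$ works: $\lmodel(\woru\worw)\neq_\Stasim\lmodel(\worv\worw)$ is the hypothesis, and $\isdef{\lmodel}(\worv\worw)=1$ holds by assumption, with $\worw\in\pref(\worw)$.

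For the hard case, suppose $\isdef{\lmodel}(\worv\worw)=0$. Writing $\worw=\symb_1\cdots\symb_n$, since $\isdef{\lmodel}(\worv)=1$ but $\isdef{\lmodel}(\worv\worw)=0$ there is a smallest index $k$ with $\proba{}(\worv\symb_1\cdots\symb_k)=0$; set $\worw'=\symb_1\cdots\symb_{k-1}$, so $\proba{}(\worv\worw')>0$ while $\proba{}(\worv\worw'\symb_k)=0$. By multiplicativity $\proba{}(\worv\worw')>0$ forces $\lmodel(\worv\worw')(\symb_k)=0$, i.e.\ $\symb_k\notin\supp(\lmodel(\worv\worw'))$. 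On the other hand the preliminary fact gives $\isdef{\lmodel}(\woru\worw'\symb_k)=1$ (as $\worw'\symb_k\in\pref(\worw)$), so $\lmodel(\woru\worw')(\symb_k)>0$, i.e.\ $\symb_k\in\supp(\lmodel(\woru\worw'))$. Hence the two supports differ, and by the contrapositive of~(\ref{def:supp}) we obtain $\lmodel(\woru\worw')\neq_\Stasim\lmodel(\worv\worw')$; together with $\isdef{\lmodel}(\worv\worw')=1$ and $\worw'\in\pref(\worw)$, this $\worw'$ is the required witness (the degenerate subcase $k=1$ gives $\worw'=\emptyW$ and is handled uniformly).

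The only delicate point is the hard case. The key idea is that a $\worv$-path which eventually dies must die at a single first step, and at exactly that step the still-alive $\woru$-path certifies a support discrepancy, which the requirement~(\ref{def:supp}) converts into a $\neq_\Stasim$ inequality on the \emph{shorter} prefix $\worw'$. I expect this support-to-$\Stasim$ promotion to be where all the content lies; the remainder is bookkeeping with the multiplicative definition of $\proba{}$.
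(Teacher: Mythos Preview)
Your proposal is correct and follows essentially the same approach as the paper's own proof: a case split on whether $\isdef{\lmodel}(\worv\worw)=1$, taking $\worw'=\worw$ in the easy case and, in the hard case, locating the first point where the $\worv$-path dies to produce a support mismatch that requirement~(\ref{def:supp}) converts into $\neq_\Stasim$. You have been more explicit than the paper about the prefix-monotonicity of $\isdef{\lmodel}$ and the multiplicative reasoning behind it, but these are details the paper leaves implicit rather than genuine differences in strategy.
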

%
%
%
For the sake of readability, we assume hereinafter that, unless stated otherwise, the congruence relation is associated with an equivalence \Stasim\ and omit the subscript.\\

\noindent
\textbf{Quotients}
\indist\ induces a \emph{quotient} $\qlmodel:\ccla{\Words}\ra\cla{\ProbT}$ defined as follows:
$\qlmodel(\ccla{\woru}) \eqdef \cla{\lmodel(\woru)}$.
For a PDFA \Aut, its quotient \qAut\ is $( \qSta, \qstaI, \qpolicy, \qtra )$, where 
$\qSta \eqdef \ccla{\reach(\Sta)}$, with $\reach(\Sta) \eqdef \bigcup_{\woru\in\Words} \traW(\woru)$,
$\qstaI \eqdef \ccla{\staI}$,
$\qpolicy(\ccla{\sta}) \eqdef \cla{\sta}$, and
$\qtra(\ccla{\sta}, \symb) \eqdef \ccla{\tra(\ccla{\sta},\symb)}$ for all $\symb\in\Symb$.

\begin{figure}[htbp]
\centering
    \begin{subfigure}[c]{0.45\textwidth}
            \begin{tikzpicture}\small
                \node[state, initial above] (q0) {\stackanchor{$q_0$}{0.9}};
                \node[state, right of=q0] (q1) {\stackanchor{$q_1$}{0.9}};
                \node[state, left of=q0] (q2) {\stackanchor{$q_2$}{0.1}};
                \draw   
                    (q0) edge[above] node{$a/0.1$} (q1)
                    (q0) edge[above] node{$b/0$} (q2)
                    (q1) edge[bend right=75, above] node{$a/0.1$} (q0)
                    (q1) edge[loop above] node{$b/0$} (q1)
                    (q2) edge[bend left=75, above] node{$a/0.2$} (q0)
                    (q2) edge[loop above] node{$b/0.7$} (q2);
            \end{tikzpicture}
    \end{subfigure}
    \begin{subfigure}[c]{0.45\textwidth}
            \begin{tikzpicture}\small
                \node[state, initial above] (q0) {\stackanchor{$q'_0$}{0.9}};
                \node[state, left of=q0] (q2) {\stackanchor{$q'_2$}{0.1}};
                \draw   
                    (q0) edge[loop right] node{$a/0.1$} (q0)
                    (q0) edge[above] node{$b/0$} (q2)
                    (q2) edge[loop left] node{$a/0.2$} (q2)
                    (q2) edge[loop above] node{$b/0.7$} (q2);
            \end{tikzpicture}
    \end{subfigure}
    \caption{Difference between $\indistICGI_{\Stasim}$ and $\indist_{\Stasim}$}
    \label{fig:pdfa_ex3}
\end{figure}

From (\ref{def:indist_sta}), it follows that each $\qsta\in\qSta$ can be represented by an \emph{access} string \woru\ with $\qsta = \ccla{\traW(\woru)}$. 
Let $\accstr(\qsta)$ be the designated access string of \qsta. W.l.o.g., $\accstr(\qstaI) \eqdef \emptyW$.
Given $\qAut$, we can construct a PDFA $\qAut_\accstr \eqdef ( \qSta, \qstaI, \qpolicy_\accstr, \qtra )$, where for all $\qsta\in\qSta$, 
$\qpolicy_\accstr(\qsta) \eqdef \policyW(\accstr(\qsta))$.
Clearly, all choices of \accstr\ yield isomorphic PDFA that are \indist-equivalent. Thus, unless necessary, we omit \accstr\ and use \qAut\ to refer to any such PDFA.
$\qAut$ is the smallest PDFA which is \indist-equivalent to \Aut.
As an example, let \Aut\ and \AutB\ be the PDFA in Fig.~\ref{fig:pdfa_ex3}(left) and (right), respectively. Since all states of \qAut\ are $\not\indistICGI$, we have that $\qAut_{\indistICGI}$ = \Aut. However, $\qAut_{\indist}$ = \AutB\ because $\sta_0 \indist \sta_1 \not\indist \sta_2$.

Here, it is worth to mention that while the choice of \accstr\ is irrelevant with respect to the congruence, different ones may result in different $\proba{\terminal}$. 
Nevertheless, if \Stasim\ induces convex classes, as is the case for quantization, \rank{}, and \topr{} defined in~\cite{DBLP:conf/icgi/MayrYCPV23}, it is always possible to define $\qpolicy(\qsta)$ as a convex combination of distributions in $\cla{\qpolicy_\accstr(\qsta)}_\Stasim$.
%

\section{Learning algorithm}\label{sec:omit-zero}
\vspace*{-1ex}


Based on the results of Sec.~\ref{sec:lmodels}, we develop the algorithm  \textbf{Omit-Zero}, to learn \indist-minimal PDFA. It is a variant of \QuaNT~(\cite{DBLP:conf/icgi/MayrYCPV23}) that differs in specific steps indicated with boxes in Alg~\ref{alg:quant}.
For lack of space, we focus on these. 
\textbf{Omit-zero} maintains a tree $\NTree$ whose nodes are strings which are partitioned in two sets, namely, $\AccS\subset\Words$ and $\DisS\subset\Words$ of \emph{access} and \emph{distinguishing} strings, respectively. 
\AccS\ is the set of \emph{leafs}. Each $\woru\in\AccS$ is labelled with the distribution $\lmodel(\woru)$. 
\DisS\ is the set of non-leaf nodes. 
Both \AccS\ and \DisS\ contain \emptyW, which is also the root and a leaf of \NTree.
Arcs in \NTree\ are labeled with classes in $\cla{\ProbT}$. 
Every outgoing arc from a non-leaf node is labeled with a different class.  
$\forall\woru\neq\woru' \in \AccS$, the lowest common ancestor, $\worw=\lca(\woru, \woru')$, is such that $\lmodel(\woru \worw) \neq\lmodel(\woru' \worw)$.
A difference with \QuaNT\ is that \NTree\ satisfies the following properties: 

\begin{minipage}[b]{0.4\textwidth}
 \begin{align}
 \forall\woru\in\AccS:\qquad\qquad    & \isdef{\lmodel}(\woru)=1 \label{req:all_leafs_defined} 
\end{align}   
\end{minipage}
\begin{minipage}[b]{0.55\textwidth}
 \begin{align}
    & \forall\worw\in\pathT_{\woru}.\ \worw\neq\emptyW \implies \worw_1\in\supp(\lmodel(\woru)) \label{req:first_symbol_defined}
\end{align}   
\end{minipage}
where $\pathT_{\woru} \subseteq \DisS$ is the path of distinguishing strings from the root to the leaf $\woru$.
Notice that (\ref{req:all_leafs_defined}) implies there is no leaf for the class \isundef\ of undefined strings.
\textbf{Omit-Zero} \build\ is different to \QuaNT\ in the way transitions are added. For all $\woru\in\AccS$ and $\symb\in\Symb$:
\begin{align}\label{omit_zero_build}
\tra(\sta_{\woru}, \symb) &\eqdef 
\begin{cases}
\sta_{\woru'} & \symb\in\supp(\lmodel(\woru)),\ \woru' = \sift(\woru\symb) \\ 
\qhole & \textrm{otherwise}
\end{cases}
\end{align}
If \EQ\ returns a counter example $\ce$, i.e, $\lmodel(\ce)\neq\Aut(\ce)$, it is required to be defined in \Aut:
\begin{align}\label{req:ce_defined}
\forall\ce=\EQ(\Aut,\Stasim)\neq\bot.\ \isdef{\Aut}(\ce)=1
\end{align}

\begin{wrapfigure}[16]{R}{0.47\textwidth}
\vspace*{-1em}
    \begin{minipage}{0.47\textwidth}
        \begin{algorithm}[H]\small        
    
        
        $\Aut \leftarrow \mathrm{InitialHypothesis}(\Stasim)$;
        \fbox{$\ce \leftarrow \EQ(\Aut, \Stasim)$}\;
    
        \If{\ce=$\bot$}{
            \Return \Aut\;
        }
        \fbox{$\NTree \leftarrow  \mathsf{InitializeTree}(\ce, \Stasim)$}\;
    
        \While{$\ce\neq\bot$}{
            \fbox{$\Aut \leftarrow \build(\NTree);$} \;
            \fbox{$\ce \leftarrow \EQ(\Aut, \Stasim)$}\;
        
            \If{$\ce\neq\bot$}{
                \fbox{$\NTree \leftarrow \update(\NTree, \ce, \Stasim)$}\;
            }
        }
        \Return \Aut\;
        \caption{Learning algorithm.}
        \label{alg:quant}
        \end{algorithm}
    \end{minipage}
\end{wrapfigure}

Let $\lmodel(\ce)\neq\Aut(\ce)$. By Req.~\ref{req:ce_defined} and Prop.~\ref{prop:counterexample}, there is some $\ce_{<j}\in\pref(\ce)$ such that 
$\isdef{\lmodel}(\ce_{<j})=1$, $\lmodel(\ce_{<j})\neq\Aut(\ce_{<j})$ and for all $i<j$, $\lmodel(\ce_{<i})=\Aut(\ce_{<i})$.
\textsf{InitializeTree} creates the first instance of \NTree, adding \emptyW\ to \DisS\ as root and as leaf to \AccS, which satisfies $\isdef{\lmodel}(\emptyW)$. \QuaNT\ adds \ce\ to \AccS\ which may not be defined. Instead, \textbf{Omit-Zero} adds $\ce_{<j}$ to \AccS. 
Function \update\ only adds $\ce_{<j}$ to \AccS\ at each call. 
The other operation that could add a leaf to \AccS\ is \textsf{sift-update}, called by \sift\ inside \build\ with $\woru\symb$, where $\woru\in\AccS$ and $\symb\in\supp(\lmodel(\woru))$ by (\ref{req:first_symbol_defined}), thus satisfying $\isdef{\lmodel}(\woru\sigma)$.
Then,
%
\text{\rm\bf Omit-Zero} ensures (\ref{req:all_leafs_defined}). 
Moreover, 
the only operation that adds a string $\worw\neq\emptyW$ to \DisS\ is \update, with $\worw=\ce_j\worw'=\lca(\woru,\ce_{<j})$, for some $\woru$ that was already in \AccS, $\ce_j\in\supp(\woru)$, and $\lmodel(\woru)=\lmodel(\ce_{<j})$ (see definition of \sift\ in~\cite{DBLP:conf/icgi/MayrYCPV23}). By (\ref{req:all_leafs_defined}), $\isdef{\lmodel}(\woru)=\isdef{\lmodel}(\ce_{<j})=1$, so $\ce_j\in\supp(\ce_{<j})$.
Then, \text{\rm\bf Omit-Zero} ensures (\ref{req:first_symbol_defined}). 

\vspace*{-1ex}
\begin{prop}\label{prop:termination}
For any PDFA \Aut, \textbf{Omit-Zero} terminates and computes $\qAut$.
\end{prop}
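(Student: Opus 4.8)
The plan is to establish termination and the equality with $\qAut$ separately, following the standard shape of a discrimination-tree active-learning argument, adapted to the congruence $\indist_\Stasim$ and to the hole $\qhole=\boldsymbol{0}$ into which \textbf{Omit-Zero} collapses all zero-probability behaviour. Throughout I would lean on the two structural facts already verified for \NTree\ in the paragraphs preceding the statement: every leaf satisfies~(\ref{req:all_leafs_defined}), and every root-to-leaf path satisfies~(\ref{req:first_symbol_defined}).

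For termination I would maintain the loop invariant that \emph{the access strings in $\AccS$ denote pairwise distinct $\indist_\Stasim$-classes, none equal to $\qhole$} — the last clause being exactly~(\ref{req:all_leafs_defined}). Distinctness holds because for $\woru\neq\woru'\in\AccS$ with $\worw=\lca(\woru,\woru')$ the defining property of \NTree\ gives $\lmodel(\woru\worw)\neq_\Stasim\lmodel(\woru'\worw)$, and~(\ref{req:first_symbol_defined}) keeps both readings defined, so this is a genuine witness of $\woru\not\indist_\Stasim\woru'$. Each call to \update\ preserves the invariant and strictly enlarges $\AccS$: by Req.~\ref{req:ce_defined} and Prop.~\ref{prop:counterexample} there is a defined breakpoint $\ce_{<j}$ (with $\isdef{\lmodel}(\ce_{<j})=1$) at which $\lmodel$ disagrees with the current hypothesis; $\ce_{<j}$ sifts to an existing leaf $\woru$ with $\lmodel(\woru)=\lmodel(\ce_{<j})$, and \update\ separates them by inserting $\ce_j\worw'=\lca(\woru,\ce_{<j})$ into \DisS\ and promoting $\ce_{<j}$ to a new leaf, so $\#\AccS$ grows by one and $\ce_{<j}$ names a new class. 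Since by Proposition~\ref{prop:one_to_one} the map $\phi$ injects $\ccla{\Words}_\Stasim\setminus\{\qhole\}$ into $\cclaICGI{\Words}_\Stasim$, which is finite for the PDFA \Aut\ (at most $\#\Sta$), the while loop iterates at most $\#\cclaICGI{\Words}_\Stasim$ times.

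For correctness I would argue that the hypothesis returned when \EQ\ answers $\bot$ equals $\qAut$. That hypothesis is $\indist_\Stasim$-equivalent to \Aut; its states are the leaves of \NTree\ together with $\qhole=\boldsymbol{0}$, and by the invariant the leaves are pairwise $\not\indist_\Stasim$, so it is $\indist_\Stasim$-minimal. Reading transitions off~(\ref{omit_zero_build}), every $\symb\in\supp(\lmodel(\woru))$ is routed by \sift\ exactly as $\qtra$ prescribes while every zero-probability symbol is sent to $\qhole=\boldsymbol{0}$, matching the transitions of $\qAut$ into its $\boldsymbol{0}$-class. As $\qAut$ is the unique smallest PDFA $\indist_\Stasim$-equivalent to \Aut, the two coincide.

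The crux is the \update\ step and its interaction with zero probabilities. One must check that the distinguishing string $\ce_j\worw'$ committed to \DisS\ keeps both compared continuations defined, so that the test $\lmodel(\woru\,\ce_j\worw')\neq_\Stasim\lmodel(\ce_{<j}\,\ce_j\worw')$ is meaningful and genuinely certifies a new class, while simultaneously re-establishing~(\ref{req:all_leafs_defined}) and~(\ref{req:first_symbol_defined}) on the updated tree. This is precisely where Prop.~\ref{prop:counterexample} is needed, relocating any undefined witness of inequivalence onto a defined prefix before it is recorded.
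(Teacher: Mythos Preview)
Your proposal is correct and follows essentially the same approach as the paper: the paper's proof is a two-line sketch that reduces correctness to that of \QuaNT\ together with the tree invariants~(\ref{req:all_leafs_defined})--(\ref{req:first_symbol_defined}), and termination to that of \QuaNT\ together with Prop.~\ref{prop:one_to_one}; you simply unpack this into the standard discrimination-tree invariant-and-progress argument, using the same ingredients in the same roles (the invariants to keep leaves defined, Req.~\ref{req:ce_defined} with Prop.~\ref{prop:counterexample} to relocate the counterexample onto a defined prefix in \update, and Prop.~\ref{prop:one_to_one} for the finite bound on $\#\AccS$). One small caveat worth tightening: invariant~(\ref{req:first_symbol_defined}) only guarantees $\worw_1\in\supp(\lmodel(\woru))$, not that the full continuation $\woru\worw$ is defined, so ``(\ref{req:first_symbol_defined}) keeps both readings defined'' overstates what that invariant alone gives---though the lca disagreement already yields $\woru\not\indistICGI_\Stasim\woru'$, which via Prop.~\ref{prop:one_to_one} still delivers the finite bound you need.
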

\vspace*{-1em}
\begin{proof}(Sketch)
Correctness of \QuaNT\ and (\ref{req:all_leafs_defined})-(\ref{req:first_symbol_defined}) imply \textbf{Omit-Zero} computes $\qAut$.
Termination of \QuaNT\ and Prop.\ref{prop:one_to_one} imply \textbf{Omit-Zero} terminates. 
\end{proof}

\paragraph{Performance experiments}\label{ssec:experiments}
We compare \textbf{Omit-Zero} against two instances of \QuaNT, varying the behavior of the teacher: \textbf{Standard} uses Hopcroft-Karp algorithm~\cite{hopcroft_karp} as \EQ\ and \MQ\ as in~\cite{DBLP:conf/icgi/MayrYCPV23}, while \textbf{Teacher-Filter} checks if the string being queried by \MQ\ traverses a 0-probability transition, in which case it identifies it as undefined. \textbf{Omit-Zero} and \textbf{Teacher-Filter} use as \EQ\ an adaptation of Hopcroft-Karp that avoids traversing 0-probability transitions.
The comparison is done by randomly generating PDFA. First, we construct DFA using the algorithm in~\cite{nicaud}, which for a given \emph{nominal} size of $n$ it generates DFA of \emph{actual} reachable size normally distributed around $n$. 
Then, DFA are transformed into PDFA by assigning a random probability distribution over \SymbT\ to every state. A parameter $\theta$ is used to control the probability of a symbol to be 0. 
%


\textbf{Running times as function of $\theta$. }
10 random PDFA with $n=500$ and $|\Symb| = m = 20$ were generated for each $\theta\in[0.9, 1)$, with step $0.02$. Each one was run 10 times for every PDFA using quantization equivalence (\cite{DBLP:conf/icgi/MayrYCPV23}), adapted to satisfy (\ref{def:supp}), with parameter $\quantp = 100$.
Fig.~\ref{fig:exps}(a) shows \textbf{Omit-Zero} has the best performance, with an almost constant but important improvement with respect to \textbf{Teacher-Filter}.
%

\textbf{Running times as function of $n$.~}
We compared the performance on 10 random PDFA with $n = 250, 500, 750, 1000$, and $m=10$, using $\quantp=10$ and $\theta=0.9$. Each algorithm was run 10 times for each PDFA.
Fig.~\ref{fig:exps}(b) shows the median of the execution time curves for $n$.
\textbf{Omit-Zero} is always faster than the other two, achieving a speedup of approximately 24x and 3x with respect to \textbf{Standard} and \textbf{Teacher-Filter}, respectively, for $n=1000$.

\begin{figure}[htbp]
\centering
\begin{subfigure}[c]{0.45\textwidth}
            \includegraphics[width=\textwidth]{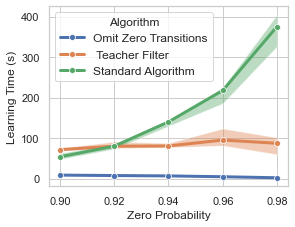}
\end{subfigure}
\hspace*{1ex}
\begin{subfigure}[c]{0.45\textwidth}
            \includegraphics[width=\textwidth]{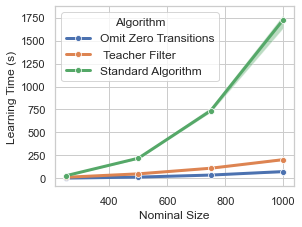}
\end{subfigure}
\caption{Running time curves: %
(left) As function of $\theta$ %
(right) As function of $n$}
\label{fig:exps}
\end{figure}

\vspace*{-3em}
\section{Analyzing large language models}\label{sec:llm}

\paragraph{Guiding generation}\label{ssed:guiding}
Guiding an LLM to generate strings of interest consists in synchronizing it with a automaton that defines the set of symbols that can be drawn at each step of the generation process, which could be constrained further by a sampling strategy. 
To illustrate how the synchronization works, consider the language model given by the PDFA \lmodel\ in Fig.~\ref{fig:synchro} (0-probabilities are omitted). The guide \AutG\ is a \emph{weighted} automaton that defines a \emph{mask} at each state: a weight of 1 for a symbol means it is allowed, otherwise it is not. $\lmodel\sync\AutG$ is a weighted automaton whose underlying structure is the product automaton, and weights are obtained by taking the product of the distribution of the state of \lmodel\ with the weights of the state of \AutG. To obtain PDFA \AutB, we apply the sampling strategy \samptopr{2}. 

\begin{figure}[htbp]
\centering
\begin{subfigure}[c]{0.25\textwidth}
            \begin{tikzpicture}\small
                \node[state, initial left] (q0) {\stackanchor{$q_0$}{$\frac{1}{10}$}};
                \node[state, right of=q0] (q1) {\stackanchor{$q_1$}{$\frac{5}{10}$}};
                \draw   
                    (q0) edge[bend left=75, above] node{$b/\frac{2}{10}$} (q1)
                    (q0) edge[loop above] node{$a/\frac{7}{10}$} (q0)
                    (q1) edge[above] node{$b/\frac{2}{10}$} (q0)
                    (q1) edge[loop above] node{$a/\frac{3}{10}$} (q1);
            \end{tikzpicture}
\end{subfigure}
\begin{subfigure}[c]{0.25\textwidth}
            \begin{tikzpicture}\small
                \node[state, initial above] (q0) {\stackanchor{$q'_0$}{1}};
                \node[state, right of=q0] (q1) {\stackanchor{$q'_1$}{1}};
                \draw   
                    (q0) edge[bend left=75, above] node{$a/1$} (q1)
                    (q1) edge[loop above] node{$a/1$} (q1)
                    (q1) edge[above] node{$b/1$} (q0);
            \end{tikzpicture}
\end{subfigure}
\begin{subfigure}[c]{0.49\textwidth}
            \begin{tikzpicture}\small
                \node[state, initial above] (q0) {\stackanchor{$q_0,q'_0$}{$\frac{1}{8}$}};
                \node[state, right of=q0] (q1) {\stackanchor{$q_0,q'_1$}{0}};
                \node[state, right of=q1] (q2) {\stackanchor{$q_1,q'_0$}{$\frac{5}{8}$}};
                \node[state, right of=q2] (q3) {\stackanchor{$q_1,q'_1$}{$\frac{5}{8}$}};
                \draw   
                    (q0) edge[above] node{$a/\frac{7}{8}$} (q1)
                    (q1) edge[loop above] node{$a/\frac{7}{9}$} (q1)
                    (q1) edge[above] node{$b/\frac{2}{9}$} (q2)
                    (q2) edge[above] node{$a/\frac{3}{8}$} (q3)
                    (q3) edge[loop above] node{$a/\frac{3}{8}$} (q3);
            \end{tikzpicture}
\end{subfigure}
\caption{Synchronization: (left) \lmodel\ (center) \AutG\ (right) $\AutB = \samptopr{2}(\lmodel\sync\AutG)$ }
\label{fig:synchro}
\end{figure}

\paragraph{Learning}\label{ssed:llm_learning}
The teacher knows \lmodel\ and \AutG, while the learner only knows the alphabet of \AutG, and its task is to learn the quotient \qAutB\ of the composition \AutB\ modulo \indist. Notice that in Fig.~\ref{fig:synchro}, \AutB\ is actually not \indist-minimal because $(\sta_1,\sta'_0) \indist (\sta_1,\sta'_1)$. As in~\cite{mayr2021property}, the composition is done \emph{on-demand} during learning. Hence, only \qAutB\ is built.
Moreover, whenever \lmodel\ is an LLM, it is not possible to use as \EQ\ the adapted version of Hopcroft-Karp as done in the experiments in Sec.~\ref{sec:omit-zero}. In this case, Prop.~\ref{proof:counterexample} enables sampling strings doing random walk from the hypothesis constructed by \textbf{Omit-Zero} in order to ensure (\ref{req:ce_defined}).



\paragraph{Tokenizers}\label{ssec:tokenizers}
An LLM, such as GPT2, is a language model whose symbols are usually called \emph{tokens}, denoted \Tok, with $\bos,\eos\in\Tok$ special tokens for \emph{begin} and \emph{end} of sequence. 
To actually query an LLM $\lmodel:\Tokseq\ra\ProbTok$, a string of characters is transformed into a string of tokens by a \emph{tokenizer} $\tokenizer:\Str\ra\Tokseq$.
As an example, consider Huggingface \texttt{Tokenizer}\footnote{\url{https://huggingface.co/docs/transformers/main_classes/tokenizer}}. 
It provides a parameterized tokenizer for various language models. 
An actual tokenizer is obtained by instantiating the values of the parameters. Table~\ref{tab:encoded_gpt2} illustrates the effect of changing the value of parameter \emph{add\_prefix\_space} for GPT2. 
Therefore, in order to guide an LLM with an automaton \AutG, we need to fix \tokenizer\ and also map the symbols \Symb\ of \AutG\ to \Tokseq, by a function $\tostr:\Symb\ra\Str$. We define $\tokenize{\symb} \eqdef \tokenizer(\tostr(\symb))$, and $\tokenize{\terminal} \eqdef \eos$. 
\begin{table}[htbp]\small
  \centering
  \begin{tabular}{|c|c|c|c|c|c|}\hline
    \multirow{2}{*}{\textbf{Symbol}}
    & \multirow{2}{*}{\textbf{\Str}}
    & \multicolumn{2}{|c|}{\textbf{Prefix space}} 
    & \multicolumn{2}{|c|}{\textbf{No prefix space}} \\ \cline{3-6} 
    &
    & \textbf{Tokens} & \textbf{Decoded} 
    & \textbf{Tokens} & \textbf{Decoded} \\ \hline
    \charsymb{medicine} & `medicine' 
    & 9007 & ` medicine' 
    & 1150, 291, 500 & `med', `ic', `ine' \\ \hline
    \multicolumn{6}{}\\
  \end{tabular}
  \caption{Results obtained with two tokenizer instances for GPT2}
  \label{tab:encoded_gpt2}
\end{table}
Now, we must define the probabilities of symbols which are mapped to a sequence of tokens, such as \charsymb{medicine} when \emph{add\_prefix\_space} is false. In this case, we define its probability as the product of the outputs of the LLM for the list of tokens generated by \tokenizer. 
Formally, let $\tokenize{\emptyW} \eqdef \tokenizer(\bos)$, and $\tokenize{\woru\symb} \eqdef \tokenize{\woru}\tokenize{\symb}$. 
$\lmodel_{\tostr,\tokenizer}:\Words\ra\ProbT$ is defined as follows:
\begin{align}\label{def:proba_symb}
\lmodel_{\tostr,\tokenizer}(\woru)(\symb) 
&= \prod\nolimits_{i=1}^{|\tokenize{\symb}|} 
        \lmodel(\tokenize{\woru}\tokenize{\symb}_{<i})(\tokenize{\symb}_i)    
\end{align}

\paragraph{Case study 1}
We run \textbf{Omit-Zero} on GPT2 using the guiding automaton $\AutG_1$ of Fig.~\ref{fig:guiding_pdfa}(a) with \samptopr{2} for both tokenizers. This automaton corresponds to the regex in~\cite{kuchnik2023validating}. The goal is to analyze bias on different professions, namely, medicine, art, computer science, science, information systems, math, engineering, social sciences, humanities, business, after `The man was trained in' and `The woman was trained in'. For convenience \tostr(\charsymb{trained}) is `was trained in'. Table~\ref{tab:man_woman} shows the results obtained for the states of interest in the learnt PDFA, which vary considerably depending on the tokenizer.
%
%
\begin{table}[htbp]
\centering
{\small
\begin{tabular}{|c|lc|lc}\hline
\multirow{2}{*}{\textbf{Access string}}                    
& \multicolumn{2}{c|}{\textbf{With prefix space}}                                  
& \multicolumn{2}{c|}{\textbf{No prefix space}} \\ \cline{2-5} 
& \multicolumn{1}{c|}{\textbf{Symbol 1}} & \multicolumn{1}{c|}{\textbf{Symbol 2}} 
& \multicolumn{1}{c|}{\textbf{Symbol 1}} & \multicolumn{1}{c|}{\textbf{Symbol 2}} \\ \hline
\charsymb{The.man.trained} 
& \multicolumn{1}{c|}{\charsymb{medicine} 0.57}     
& \charsymb{engineering} 0.43                                     
& \multicolumn{1}{c|}{\charsymb{art} 0.72}            
& \multicolumn{1}{c|}{\charsymb{math} 0.28} \\ \hline
\charsymb{The.woman.trained}
& \multicolumn{1}{c|}{\charsymb{medicine} 0.65}        
& \charsymb{business} 0.35                                      
& \multicolumn{1}{c|}{\charsymb{art} 0.80}           
& \multicolumn{1}{c|}{\charsymb{engineering} 0.20} \\ \hline
\multicolumn{5}{}{}
\end{tabular}}
\caption{Probabilities of $\samptopr{2}({GPT2}\times\AutG_1)$ for different tokenizers.}
\label{tab:man_woman}
\end{table}
\vspace*{-1em}
\paragraph{Case study 2}
To study the fidelity of sampling with a learnt PDFA we ran two experiments. First we compare the distributions obtained by guided sampling $10K$ floating points in $[0,1]$ directly on GPT2 and on a PDFA obtained with \textbf{Omit-Zero} by composing GPT2 with the $\AutG_2$ (Fig.~\ref{fig:guiding_pdfa}(b)) that allows only digits $0,\ldots,9$. Second, we use a guiding automaton which allows all 994 numeric tokens of GPT2 and compare the resulting PDFA also with Outlines~\cite{willard2023efficient}. PDFA were obtained using quantization equivalence with $\kappa=100$ and time bounds of 30 and 300 secs, respectively.
%
%
Fig.~\ref{fig:hist_kappa_100} shows the resulting distributions for the first experiment. The $\chi^2$ and Kolmogorov-Smirnov (KS) tests for equality of distributions give the following pvalues: $0.64$ for $\chi^2$ with 10 bins, $0.49$ for $\chi^2$ with 20 bins, and $0.86$ for KS. The KS pvalue for the length distributions is $0.99$. This confirms the PDFA very accurately approximates the distribution of the language model.

\begin{figure}[htbp]
\centering
\includegraphics[width=0.7\textwidth]{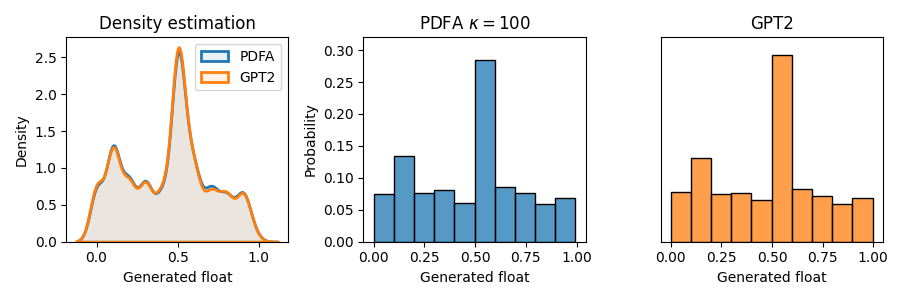}
\includegraphics[width=0.28\textwidth]{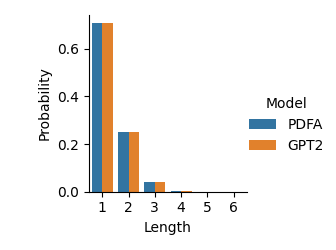}
\caption{Distributions of floats and the lengths of their representing strings (digit sampling).}
\label{fig:hist_kappa_100} 
\end{figure}

\vspace*{-.5em}
\noindent
Fig.~\ref{fig:hist_all_tokens} exhibits the resulting distributions for the second experiment. For 10 bins, the $\chi^2$ pvalue for PDFA vs GPT2 is $0.76$ and for Outlines vs GPT2 is $3\times 10^{-33}$, showing that sampling from the PDFA is more accurate than Outlines for the first digit. However, for 20 bins $\chi^2$ and KS (floats and lengths), pvalues are extremely small. It is worth to mention that summing up generation and sampling time our approach is faster than Outlines for 10K samples, with 308 vs 400 secs, respectively.
\vspace*{-1em}
\begin{figure}[htbp]
\centering
\includegraphics[width=0.7\textwidth]{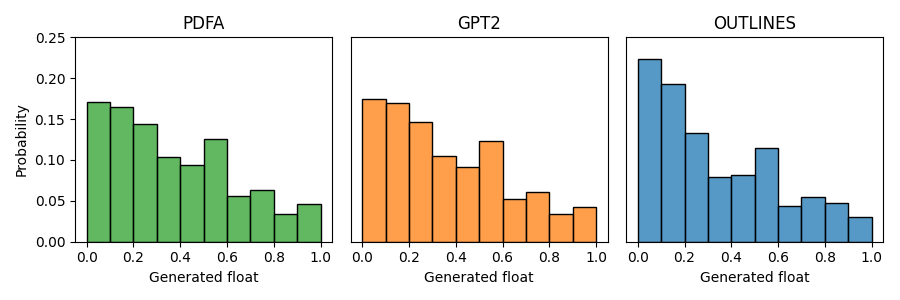}
\includegraphics[width=0.29\textwidth]{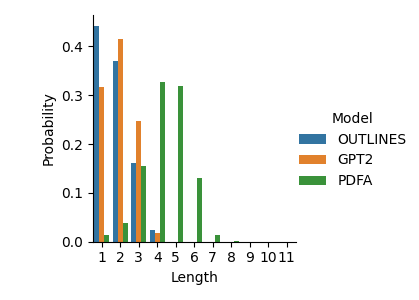}
\caption{Distributions of floats and the lengths of their representing strings (token sampling).}
\label{fig:hist_all_tokens} 
\end{figure}

\section{Conclusions}\label{sec:concl}
\vspace*{-1ex}
This work was motivated by the need of understanding LLM when their operation is controlled by external artifacts, such as grammars, to generate text following a specific format. An important question that arise in this context is how to deal with 0-probabilities that appear when restricting their output.
To start up with, we revised the congruence (\ref{def:indist_DH}) in order to make constructing the quotient less dependent on $\proba{}$ by expressing it in terms of the output of the language model. The first consequence of this operational view is to allow a generalization of the congruence capable of dealing with equivalences on distributions. Besides, it led to developing a variant of the \QuaNT\ active-learning algorithm to efficiently learn PDFA by avoiding to check for 0-probability transitions as much as possible. This is essential to make it computationally feasible by reducing the number of queries to the LLM.
%
%
The experimental results\footnote{ \url{https://github.com/neuralchecker/analyzing_constrained_LLM_through_PDFA_learning}} support the viability of our approach for analyzing and validating statistical properties of LLM, such as bias in text generation. Besides, they provided evidence that distributions resulting from generation of a guided LLM could be well approximated by a learnt PDFA. This opens the door to make these analyses less dependent on sampling by studying properties of the PDFA.

\paragraph{Acknowledgements}\label{sec:aknowledgments}
Research reported in this article has been partially funded by ANII-Agencia Nacional de Investigaci\'on e Innovaci\'on under grants IA\_1\_2022\_1\_173516, FMV\_1\_2023\_1\_175864, POS\_NAC\_2023\_1\_178663, and POS\_FMV\_2023\_1\_1012218.






\bibliography{biblio}
\bibliographystyle{plain}

\appendix
\section{Proof of Proposition~\ref{prop:indist_new}}\label{proof:prop_indist_new}

\begin{proof}
Let $u$ and $v$ in $\Words$ be arbitrary.
\begin{enumerate}
\item Assume that $u\indist v$.

If $\isdef{\lmodel}(\woru)=0$, then the lhs of (\ref{def:indist_DH}) is undefined for any $\worw\in\Words$. Then $\isdef{\lmodel}(\worv)=0$ since otherwise the rhs of (\ref{def:indist_DH}) would be a number for any $\worw\in\Words$ (for instance it equals 1 for $\worw=\emptyW$). By symmetry if $\isdef{\lmodel}(\worv)=0$ then $\isdef{\lmodel}(\woru)=0$. Therefore $\isdef{\lmodel}(\woru)=\isdef{\lmodel}(\worv)$.

Moreover, if $\isdef{\lmodel}(\woru)=\isdef{\lmodel}(\worv)=0$ then $\isdef{\lmodel}(\woru\worw)=\isdef{\lmodel}(\worv\worw)=0$ for all $\worw\in\Words$ and there is nothing more to check.

Suppose that $\isdef{\lmodel}(\woru)=\isdef{\lmodel}(\worv)=1$ so that both sides of (\ref{def:indist_DH}) are defined for any $\worw\in\Words$. Notice also that (\ref{def:indist_DH}) implies $\isdef{\lmodel}(\woru\worw)=\isdef{\lmodel}(\worv\worw)$ for all $\worw\in\Words$. By definition of $\proba{}$ we can rewrite (\ref{def:indist_DH}) as follows:
\begin{equation}\label{eq:product}
\prod_{i=1}^{|\worw|} \lmodel\left(\woru\,\worw_{<i}\right)(\worw_i)
=
\prod_{i=1}^{|\worw|} \lmodel(\worv\,\worw_{<i})(\worw_i)
\end{equation}
for any $\worw\in \Words$ with length $|\worw|\geq 1$. In particular, varying $\worw=\symb\in\Symb$ in (\ref{eq:product}) and noticing that $\lmodel(\woru)$ and $\lmodel(\worv)$ are distributions over $\SymbT$, we see that $\lmodel(\woru)=\lmodel(\worv)$.

We will now prove by induction on the length $|\worw|$ that $\lmodel(\woru\worw)=\lmodel(\worv\worw)$ whenever $\isdef{\lmodel}(\woru\worw)=\isdef{\lmodel}(\worv\worw)=1$. We already proved the claim for $|w|=0$, so suppose it holds true for length $\leq n$. Let $\worw$ be such that $|\worw|=n+1$ and let $\symb\in \Symb$ be such that $\isdef{\lmodel}(\woru\worw\symb)=\isdef{\lmodel}(\worv\worw\symb)=1$. Since all terms involving the products in (\ref{eq:product}) are positive, and by induction hypothesis $\lmodel(\woru\,\worw_{<i})=\lmodel(\worv\,\worw_{<i})$ for all $i=1,\ldots,n$, all the these terms cancel out leaving the equality $\lmodel(\woru\worw)(\symb)=\lmodel(\worv\worw)(\symb)$. Since $\symb\in\Symb$ is arbitrary and $\lmodel(\woru\worw)$ and $\lmodel(\worv\worw)$ are probability distributions, we see again that they must be equal. This completes the proof.

\item Assume $\isdef{\lmodel}(\woru)=\isdef{\lmodel}(\worv)$ and $\forall \worw\in\Words.\ \isdef{\lmodel}(\woru\worw)=\isdef{\lmodel}(\worv\worw)=1 \implies \lmodel(\woru\worw) = \lmodel(\worv\worw)$.

If $\isdef{\lmodel}(\woru)=\isdef{\lmodel}(\worv)=0$, then the quotients in (\ref{def:indist_DH}) are undefined and equality holds trivially for all $\worw\in\Words$.

Let us suppose then that $\isdef{\lmodel}(\woru)=\isdef{\lmodel}(\worv)=1$. We first prove that $\isdef{\lmodel}(\woru\worw)=\isdef{\lmodel}(\worv\worw)$ for all $\worw\in\Words$. In fact, if on the contrary there exists $w\in\Words$ so that $\isdef{\lmodel}(\woru\worw)\neq\isdef{\lmodel}(\worv\worw)$, then there exists $\worw'\in\pref(\worw)$ with $\isdef{\lmodel}(\woru\worw')=\isdef{\lmodel}(\worv\worw')=1$ but $\lmodel(\woru\worw')\neq\lmodel(\worv\worw')$ because they have different support. This contradicts our assumption.

Let $\worw\in\Words$ be so that $\isdef{\lmodel}(\woru\worw)=\isdef{\lmodel}(\worv\worw)=1$.
Then for all prefix $\worw_{<i}$ we also have $\isdef{\lmodel}(\woru\worw_{<i})=\isdef{\lmodel}(\worv\worw_{<i})=1$, and therefore $\lmodel(\woru\worw_{<i})=\lmodel(\worv\worw_{<i})$. In particular, all the terms in (\ref{eq:product}) are equal and therefore (\ref{def:indist_DH}) holds.

This completes the proof that $\woru\indist\worv$.
\end{enumerate}
\end{proof}

\section{Proof of Proposition~\ref{prop:congruence_stasim}}\label{proof:prop_congruence_stasim}

\begin{proof}
Let $\woru\indist_{\Stasim}\worv$. If $\isdef{\lmodel}(\woru)=\isdef{\lmodel}(\worv)=0$, then $\isdef{\lmodel}(\woru\worw)=\isdef{\lmodel}(\worv\worw)=0$ for all $\worw\in\Words$. Then $\woru\symb\indist_{\Stasim}\worv\symb$ trivially.

Suppose now that $\isdef{\lmodel}(\woru)=\isdef{\lmodel}(\worv)=1$ and let $\symb\in\Symb$. We have $\isdef{\lmodel}(\woru\symb)=\isdef{\lmodel}(\worv\symb)$ by Req.~\ref{def:supp}. Let $\worw\in\Words$ be arbitrary, since concatenation of strings is associative, if $\isdef{\lmodel}((\woru\symb)\worw)=\isdef{\lmodel}((\worv\symb)\worw)=1$, then $\isdef{\lmodel}(\woru(\symb\worw))=\isdef{\lmodel}(\worv(\symb\worw))=1$ and by assumption $\lmodel(\woru(\symb\worw))=_{\Stasim} \lmodel(\worv(\symb\worw))$. Thus $\lmodel((\woru\symb)\worw)=_{\Stasim} \lmodel((\worv\symb)\worw)$. This proves that $\woru\symb\indist_{\Stasim}\worv\symb$.
\end{proof}

\section{Proof of Proposition~\ref{prop:one_to_one}}\label{proof:prop_one_to_one}

\begin{proof}
Let $\alpha:\ccla{\Words}_{\Stasim}\setminus\{\boldsymbol{0}\}\to \Words$ be any function satisfying $\alpha(c)\in c$ for all $c\in \ccla{\Words}_{\Stasim}\setminus\{\boldsymbol{0}\}$. In other words, $\{\alpha(c): c \in \ccla{\Words}_{\Stasim}\setminus\{\boldsymbol{0}\}\}$ is a set of representatives of the classes. Let $\beta:\Words\to \cclaICGI{\Words}_\Stasim$ be the quotient map $\beta(\woru)=\cclaICGI{u}_\Stasim$. Define $\phi=\beta\circ \alpha$.

Let $c,c'\in\ccla{\Words}_\Stasim\setminus\boldsymbol{0}$ be such that $\phi(c)=\phi(c')$. Denote $\woru=\alpha(c)$ and $\worv=\alpha(c')$. By construction $\isdef{\lmodel}(\woru)=\isdef{\lmodel}(\worv)=1$ and by Def.~\ref{def:indist_old} we have $\lmodel(\woru\worw)=_\Stasim\lmodel(\worv\worw)$ for all $\worw\in\Words$. In particular $\woru\indist_\Stasim\worv$, or equivalently $c=\ccla{\woru}_\Stasim=\ccla{\worv}_\Stasim=c'$.
\end{proof}

\section{Proof of Proposition~\ref{prop:counterexample}}\label{proof:counterexample}

\begin{proof}
If $\isdef{\lmodel}(\worv\worw)=1$ then $\worw'=\worw$. 
Otherwise, there exists $\worw'\symb\in\pref(\worw)$ such that $1 = \isdef{\lmodel}(\worv\worw') \neq \isdef{\lmodel}(\worv\worw'\symb) = 0$.
Hence, $\supp(\lmodel(\woru\worw')) \neq \supp(\lmodel(\worv\worw'))$
because $\isdef{\lmodel}(\woru\worw'\symb)=1$.
Thus, by Req.~\ref{def:supp}, $\lmodel(\woru\worw')\neq_\Stasim\lmodel(\worv\worw')$.
\end{proof} 

\clearpage
\section{PDFA}
\begin{figure}[htbp]
\centering
\begin{subfigure}[c]{\textwidth}\centering
            \begin{tikzpicture}[node distance=2.5cm]
                \node[state, initial] (q0) {\stackanchor{$q_0$}{0}};
                \node[state] (q1) [right of=q0] {\stackanchor{$q_1$}{0}};
                \node[state] (q2) [right of=q1] {\stackanchor{$q_2$}{0}};
                \node[state] (q3) [right of=q2] {\stackanchor{$q_3$}{0}};
                \node[state, accepting] (q4) [right of=q3] {\stackanchor{$q_4$}{1}};
                \draw[dotted, -] (8.7,0.5) -- (8.75,-0.5);            
                \draw 
                      (q0) edge[above] node {$\charsymb{The}/1$} (q1)
                      (q1) edge[bend left=35, above] node {$\charsymb{man}/1$} (q2)
                      (q1) edge[bend right=35, below] node {$\charsymb{woman}/1$} (q2)
                      (q2) edge[above] node {$\charsymb{trained}/1$} (q3)
                      (q3) edge[bend left=35, above] node {$\charsymb{art}/1$} (q4)
                      (q3) edge[bend right=35, below] node {$\charsymb{medicine}/1$} (q4);
            \end{tikzpicture}
\end{subfigure}
\begin{subfigure}[c]{\textwidth}\centering
            \begin{tikzpicture}[node distance=2.5cm]
                \node[state, initial left] (q0) {\stackanchor{$q_0$}{0}};
                \node[state, right of=q0] (q1) {\stackanchor{$q_1$}{0}};
                \node[state, right of=q1] (q2) {\stackanchor{$q_2$}{1}};
                \draw   
                    (q0) edge[above] node{$\charsymb{dot}/1$} (q1)
                    (q1) edge[above] node{$\charsymb{0}\ldots\charsymb{9}/1$} (q2)
                    (q2) edge[loop below] node{$\charsymb{0}\ldots\charsymb{9}/1$} (q2);
            \end{tikzpicture}
\end{subfigure}
\caption{Guiding automata:%
(above) $\AutG_1$ for man-woman case study
(below) $\AutG_2$ for digits case study} 
\label{fig:guiding_pdfa}
\end{figure}

\begin{figure}[htbp]
\centering
        \begin{tikzpicture}[node distance=3cm]
            \node[state, initial] (q0) {\stackanchor{$q_0$}{0}};
            \node[state] (q1) [right of=q0] {\stackanchor{$q_1$}{0}};
            \node[state] (q21) [above right of=q1] {\stackanchor{$q_{21}$}{0}};
            \node[state] (q22) [below right of=q1] {\stackanchor{$q_{22}$}{0}};
            \node[state] (q31) [right of=q21] {\stackanchor{$q_{31}$}{0}};
            \node[state] (q32) [right of=q22] {\stackanchor{$q_{32}$}{0}};
            \node[state, accepting] (q4) [right of=q3] {\stackanchor{$q_4$}{1}};
            \draw 
                  (q0) edge[above] node {$\charsymb{The}/1$} (q1)
                  (q1) edge[above, pos=.75, left=3ex] node {$\charsymb{man}/0.82$} (q21)
                  (q1) edge[below, pos=.75, left=3ex] node {$\charsymb{woman}/0.18$} (q22)
                  (q21) edge[above] node {$\charsymb{trained}/1$} (q31)
                  (q22) edge[above] node {$\charsymb{trained}/1$} (q32)
                  (q31) edge[bend left=35, above, pos=.4, right=3ex] node {$\charsymb{engineering}/0.43$} (q4)
                  (q31) edge[bend right=35, below, pos=.4, left=3ex] node {$\charsymb{medicine}/0.57$} (q4)
                  (q32) edge[bend left=35, above, pos=.4, left=3ex] node {$\charsymb{business}/0.35$} (q4)
                  (q32) edge[bend right=35, below, pos=.4, right=3ex] node {$\charsymb{medicine}/0.65$} (q4);
        \end{tikzpicture}
\caption{PDFA learnt for man-woman case study (with prefix space tokenizer)}
\label{fig:learnt_man_woman}
\end{figure}

\section{Existence of the probability measure $\probaP$}

\begin{prop}\label{prop:existence_probability}
Let $\lmodel:\Words\to\ProbT$ be a language model. There exists a unique probability measure $\probaP$ in $\Words\cup\Sigma^\omega$ such
\[
\proba{}(\worw)
=
\probaP
\big\{
\worx\in\Words\cup\Sigma^\omega:
\worw\in\pref(\worx)
\big\}
\text{ and }
\proba{\terminal}(\worw)
=
\probaP
\big\{
\worw
\big\}
\]
that for all $\worw\in\Words$.
\end{prop}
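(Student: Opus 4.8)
The plan is to realize $\probaP$ as the push-forward of a genuine infinite-product measure under a decoding map, so that the two required identities become statements about cylinder probabilities, and then to obtain uniqueness from a $\pi$-system argument. The key device is to make $\terminal$ an \emph{absorbing} symbol: once it is emitted, it is repeated forever. This turns the mixed space $\Words\cup\Symb^\omega$ of finite and infinite strings into a measurable image of the clean product space $\SymbT^\omega$.

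First I would build a measure on $\SymbT^\omega$ by Ionescu--Tulcea. For each $n\geq 1$ define a transition kernel $K_n(\cdot\mid y_1\ldots y_{n-1})$ on $\SymbT$ by: if some $y_i=\terminal$ with $i\leq n-1$ then $K_n(\{\terminal\}\mid\cdot)=1$; otherwise $K_n(\{\symb\}\mid y_1\ldots y_{n-1})=\lmodel(y_1\ldots y_{n-1})(\symb)$ for $\symb\in\SymbT$. Since each $\lmodel(\woru)$ is a probability distribution on $\SymbT$, these are genuine stochastic kernels, and the Ionescu--Tulcea theorem yields a unique probability measure $\mu$ on $\SymbT^\omega$ (with the product $\sigma$-algebra generated by cylinders) whose finite cylinders satisfy $\mu[y_1\ldots y_n]=\prod_{i=1}^{n} K_i(\{y_i\}\mid y_1\ldots y_{i-1})$. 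In particular, for $\worw=w_1\ldots w_n\in\Words$ (no letter equal to $\terminal$) one gets $\mu[\worw]=\proba{}(\worw)$, and $\mu[\worw\terminal]=\proba{}(\worw)\cdot\lmodel(\worw)(\terminal)=\proba{\terminal}(\worw)$, directly from the recursive definitions of $\proba{}$ and $\proba{\terminal}$.

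Next I would define the decoding map $\Phi:\SymbT^\omega\to\Words\cup\Symb^\omega$ sending a sequence with a first occurrence of $\terminal$ at position $j$ to the finite word $y_1\ldots y_{j-1}\in\Words$, and a $\terminal$-free sequence to itself in $\Symb^\omega$. Equipping $\Words\cup\Symb^\omega$ with the $\sigma$-algebra generated by the prefix sets $A_\worw\eqdef\{\worx:\worw\in\pref(\worx)\}$, the map $\Phi$ is measurable because $\Phi^{-1}(A_\worw)$ is exactly the cylinder $[\worw]$ in $\SymbT^\omega$; I set $\probaP\eqdef\Phi_\ast\mu$. Then $\probaP(A_\worw)=\mu[\worw]=\proba{}(\worw)$ gives the prefix identity, and since $\Phi^{-1}(\{\worw\})$ is the cylinder $[\worw\terminal]$, $\probaP(\{\worw\})=\mu[\worw\terminal]=\proba{\terminal}(\worw)$ gives the atom identity.

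For uniqueness I would invoke Dynkin's $\pi$--$\lambda$ theorem: the prefix sets $\{A_\worw:\worw\in\Words\}$ form a $\pi$-system (the intersection $A_\worw\cap A_{\worw'}$ is empty unless one of $\worw,\worw'$ is a prefix of the other, in which case it equals the set indexed by the longer word), they generate the chosen $\sigma$-algebra, and $A_{\emptyW}$ is the whole space with $\proba{}(\emptyW)=1$. Hence any probability measure agreeing with $\probaP$ on all $A_\worw$ agrees everywhere; the atom values are already forced, since $\{\worw\}=A_\worw\setminus\bigcup_{\symb\in\Symb}A_{\worw\symb}$. The only genuinely analytic content, countable additivity of the candidate set-function, is supplied for free by Ionescu--Tulcea, so the main obstacle is bookkeeping: choosing the absorbing-$\terminal$ kernels correctly, checking measurability of $\Phi$, and confirming that prefix sets and atoms pull back to the intended cylinders. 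A self-contained alternative would instead define a premeasure on the algebra generated by the $A_\worw$ together with the atoms and prove its $\sigma$-additivity by a K\"onig's-lemma/compactness argument on the prefix tree, but this merely reproves the part that Ionescu--Tulcea already delivers.
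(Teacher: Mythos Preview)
Your proposal is correct and follows essentially the same route as the paper: make $\terminal$ absorbing, invoke a product-measure extension theorem on $\SymbT^\omega$, and then transfer the resulting measure to $\Words\cup\Symb^\omega$. The only cosmetic differences are that the paper uses Kolmogorov's extension theorem and identifies $\Words\cup\Symb^\omega$ with the full-measure ``$\terminal$ is absorbing'' subset of $\SymbT^\omega$, whereas you use Ionescu--Tulcea and a push-forward map $\Phi$ (which spares you that full-measure computation), and you make uniqueness explicit via a $\pi$-system argument where the paper leaves it implicit in the uniqueness clause of Kolmogorov's theorem.
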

\begin{proof}
We first extend the definition of $\lmodel$ in order to include the termination symbol. Let $\lmodelT:\WordsT\to \ProbT$ be defined as follows
\[
\lmodelT(\worw)
=
\begin{cases}
    \lmodel(\worw) & \text{if } \worw\in\Words\\
    \delta_\terminal & \text{if }\worw\in\WordsT\setminus\Words
\end{cases}
\]
where $\delta_\terminal(\symb)=0$ for all $\symb\in\Symb$ and $\delta_\terminal(\terminal)=1$. For each $k\geq 1$, we define the finite dimensional distribution $\probaP_k:\SymbT^k\to [0,1]$ as
\[
\probaP_k
\left[
\worw
\right]
=
\prod_{i=1}^k
\lmodelT(\worw_{<i})(\worw_i)
\]
where we denote $\worw_{<i}=\symb_1\cdots\symb_{i-1}$ if $\worw=\symb_1\cdots\symb_k$, with the convention that $\worw_{<1}=\emptyW$ the empty string. Let us show that $\{\probaP_k\}_{k\geq 1}$ is a consistent family of finite dimensional distributions:
\[
\begin{aligned}
    \sum_{\symb_{k+1}}
    \probaP_{k+1}(\worw \symb_{k+1})
    = \sum_{\symb_{k+1}}\probaP_k[\worw]\lmodelT(w)(\symb_{k+1})
    = \probaP_k[\worw]\sum_{\symb_{k+1}}\lmodelT(w)(\symb_{k+1}) = \probaP_k[\worw] 
\end{aligned}
\]
By the Kolmogorov Extension Theorem (see \cite{shields1996} Thm.I.1.2) there exists a unique probability measure $\probaP$ in $\SymbT^\omega$ such that $\probaP\big\{\worx\in \SymbT^\omega:\worw\in\pref(\worx)\big\}=\probaP_k[\worw]$ for all $k\geq 1$ and any $\worw\in\SymbT^k$. Notice that in the usual measure theoretic terminology the event $\{\worx\in \SymbT^\omega:\worw\in\pref(\worx)\}$ is called a \emph{cylinder}.

The event $A = \left\{\worx\in \SymbT^\omega:\forall k\geq 1\text{ if }\worx_k=\terminal \text{ then }\worx_{k+1}=\terminal\right\}$ can be identified with $\Words\cup \Symb^\omega$. Let us show that $\probaP$ concentrates its measure in $A$, i.e. $\probaP[A]=1$. The complement of $A$ is 
\[
B=
\bigcup_{k=1}^\infty
B_k,
\quad
B_k=
\{
\worx\in\SymbT^\omega:
\worx_k=\terminal\text{ and }\worx_{k+1}\neq\terminal
\}
\]
and $B_k$ is the finite disjoint union of the cylinders of the form $C_{\worw,\symb}=\{\worx\in\SymbT^\omega:\worw\terminal \symb \in\Pref(\worx)\}$ with $\worw\in\SymbT^{k-1}$ and $\symb\in\Symb$. Therefore
\[
\probaP
\left[
B_k
\right]
=
\sum_{\worw,\symb}
\probaP[C_{\worw,\symb}]
=
\sum_{\worw,\symb}
\probaP_{k+1}[C_{\worw,\symb}]
=
\sum_{\worw,\symb}
\probaP_{k-1}\left[\worw\right]
\lmodelT(\worw)(\terminal)
\cancelto{0}{\delta_{\terminal}(\symb)}
=0
\]
and the union bound shows that $\probaP[B]\leq \sum_{k=1}^\infty\probaP[B_k]=0$.

Let us show the link beteween $\probaP$ and $\proba{}$. Let us consider first a string $\worw\in\Words$ of length $k\geq 1$. Since the event $\{\worx\in\Words\cup\Symb^\omega:\worw\in\pref(\worx)\}$ equals the cylinder $C_k=\{\worx\in\SymbT^\omega:\worw\in\pref(\worx)\}$ intersected with $A$, and $A$ has probability one, we have
\[
\begin{aligned}
\probaP
\big\{
\worx\in\Words\cup\Symb^\omega:\worw\in\pref(\worx)
\big\}
&
=
\probaP_k[C_k]
=\prod_{i=1}^k \lmodelT(\worw_{<i})(\worw_i)
=\prod_{i=1}^k \lmodel(\worw_{<i})(\worw_i)
=\proba{}(\worw)
\end{aligned}
\]
In the case $\worw=\emptyW$, the event $\{\worx\in\Words\cup\Symb^\omega:\worw\in\pref(\worx)\}$ equals $A$ and its probability is therefore 1 as it is the case for $P(\emptyW)$.

Finally, let us compute the probability of occurrence of a given finite string $\worw\in\Words$. This string corresponds to the infinite sequence $\worw\terminal\terminal\terminal\cdots$ in $\SymbT^\omega$, which in turn equals the decreasing intersection of the cylinders $C_{\worw,n}=\{\worx\in\SymbT^\omega:w(\terminal)^n\in\pref(\worx)\}$. Therefore
\[
\begin{aligned}
\probaP
\big\{
\worw
\big\}
&
=
\probaP
\left[
\bigcap_{n\geq 1} C_{\worw, n}
\right]
=
\lim_{n\to+\infty}
\left[\prod_{i=1}^{|\worw|} \lmodel(\worw_{<i})(\worw_i)\right]
\lmodel(\worw)(\terminal)
\cancelto{1}{\left[\prod_{j=0}^{n-1}\delta_\terminal(\terminal)\right]}
=
\proba{\terminal}(\worw)
\end{aligned}
\]
This concludes the proof.
\end{proof}

\end{document}